\pgfplotsset{compat=newest} 
\pgfplotsset{plot coordinates/math parser=false}
\pgfplotsset{compat=newest} 
\pgfplotsset{plot coordinates/math parser=false}
\pgfplotsset{compat=newest} 
\pgfplotsset{plot coordinates/math parser=false}
\newtheorem{theorem}{Theorem}
\def\CN{\mathcal{C}\mathcal{N}} 
\begin{document}

	\title{On the Average Secrecy Performance of Satellite Networks in Short Packet Communication Systems}

	\author[$\dagger$]{Ramin Hashemi}
        \author[$\ddagger$]{Graciela Corral Briones}
        \author[$\dagger$]{Risto Wichman}
        
        \affil[$\dagger$]{Department of Information and Communications Engineering, Aalto University, Espoo, Finland,\authorcr Emails: ramin.hashemi, risto.wichman@aalto.fi}
        \affil[$\ddagger$]{National University of Córdoba, Córdoba, Argentina,\authorcr Email: graciela.corral@unc.edu.ar}

	\maketitle

	\begin{abstract}
        This paper investigates the secrecy performance of satellite networks in short packet communication systems under shadowed Rician fading (SRF). We derive a lower bound on the average achievable secrecy rate in the finite blocklength regime (FBL) and provide analytical insights into the impact of key secrecy-related performance indicators (KPIs). Monte Carlo simulations validate the theoretical framework, and demonstrate that increasing the blocklength and improving the legitimate receiver's signal-to-noise ratio (SNR) enhance secrecy, while a stronger eavesdropper degrades it. Additionally, we show that directional antenna patterns can effectively reduce information leakage and provide secure satellite communications in the short packet regime. These findings offer valuable guidance for designing secure and efficient satellite-based communication systems, particularly in IoT and space-based networks.
	\end{abstract}
	
	\begin{IEEEkeywords}
	 Covert transmission, satellite systems, short packet communication, finite blocklength (FBL), shadowed Rician fading, statistical average, secret communication.   
	\end{IEEEkeywords}

	\IEEEpeerreviewmaketitle
    \section{Introduction}
    \bstctlcite{IEEEexample:BSTcontrol}
    Short packet communication plays a crucial role in the Internet of Things (IoT) and real-time status updates for low Earth orbit (LEO) satellites, which operate at altitudes between 400 and 2000 km above the Earth \cite{Pingyue, Kodheli}. For example, certain remote sensors, such as those deployed in oceans, transmit telemetry data— including updated coordinates and status information—via small/short packets to reduce latency \cite{Shunkai}. A practical application of this is Iridium’s Short Burst Data (SBD) satellite service, which supports global IoT connectivity by transmitting messages of up to approximately 340 bytes \cite{saari2014small}. Moreover, studies indicate that the quality of service (QoS) of short packet transmissions over satellite links is significantly impacted in various mission-critical applications \cite{Kokkoniemi25}. This requires us to guarantee the reliability and secrecy of the satellite channels as much as possible, particularly in emergency scenarios \cite{Thien}.

    Ensuring secure communication in satellite networks is of paramount importance, especially when ground stations are in a mission-critical task such that having reliable and secret communication is essential \cite{ZhangSecureSatellite,Shunkai}. A fundamental study in \cite{YangWireTap} has proposed new achievability and converse bounds that refine existing secrecy capacity bounds and provide the tightest known results for the second-order coding rate in discrete memoryless and Gaussian wiretap channels in finite blocklength (FBL) regime. These principal derivations provide significant insights for secure communication design in modern IoT and satellite networks as well \cite{Tatar,Bhargav,QingqingAoI}. The upper/lower bound introduced in \cite{YangWireTap} is applicable to wiretap channels that incorporate additive white Gaussian noise.

    Various papers have studied secret communication in satellite networks \cite{ZhangSecureSatellite,QingqingAoI,Tatar,Bhargav,Talgat}. A comprehensive study on outage probability, and physical layer security in uplink was done in \cite{ZhangSecureSatellite} by considering analytical secrecy rate expressions in infinite blocklength regime. The authors in \cite{Tatar} analyzed the security of FBL transmissions in wiretap fading channels using a new secrecy metric called average information leakage. Here, only Rayleigh and Rician fading channels were considered in analytical derivations. The study in \cite{Bhargav} presents new analytical expressions for the probability of strictly positive secrecy capacity and a lower bound on the secure outage probability over $\kappa - \mu$ fading channels. In \cite{QingqingAoI}, the authors have proposed strategies to reduce the Age of Information (AoI) in satellite-based IoT networks using Rate-Splitting Multiple Access (RSMA) for short-packet transmissions under shadowed-Rician fading conditions. Then, closed-form mathematical models for block error rate (BLER) and average AoI (AAoI) were proposed to support efficient power allocation via deep reinforcement learning. The authors in \cite{Talgat} have studied secure IoT-based LEO satellite networks by proposing availability, successful, and secure communication probabilities. To the best of our knowledge, there is no study on the average secrecy rate performance of satellite networks in FBL regime. 

    In this paper, we will shed some light on the average performance of short packet satellite networks in a secret communication scenario. 
    The performance analysis of short packet communications in terrestrial networks does not directly apply to satellite networks because of the differences in propagation environments. 
    There are various satellite channel models in the literature, such as Loo's model \cite{ChunLoo}, shadowed Rician fading \cite{AbdiJournal}, and Gaussian mixture shadowing model \cite{modelinguplink}. In this paper, we employ the model in \cite{AbdiJournal} due to its simplicity and accuracy.

    \section{System Model}
    \label{SysModelSec}
    Consider an uplink short packet transmission from a ground base station in which the transmit antenna has a specific power pattern. The ground station is named Alice, and the legitimate receiving satellite is called Bob. As shown in Fig. \ref{fig:sysmodel}, there is an eavesdropper, i.e., Eve satellite, that tries to wiretap the communication channel between Alice and Bob. The complex channel coefficients from Alice to Bob or Eve are drawn from the shadowed Rician fading (SRF) model. The instantaneous complex channel between ground station, and satellite $i$, $i \in \{\text{B}, \text{E}\}$ at time $t$ is written as \cite{AbdiJournal}
    \begin{flalign}
        h_i(t) = r_1^i(t)\exp(ja_1^i(t)) + r_2^i(t)\exp(j\beta^i), \quad j^2 = -1 \label{complexSig}
    \end{flalign}
    where $r_1^i$ follows a Rayleigh distribution, $r_1^i \sim \text{Rayleigh}(b_i)$, with $2b_i=\mathbb{E}[(r_1^i)^2]$ representing the average scatter power ($\mathbb{E}[\cdot]$ denotes the statistical expectation operator). The second component is distributed as $r_2^i\sim\text{Nakagami}(m_i,\Omega_i)$ ($m_i\geq0$, $\Omega_i\geq0$) such that $\Omega_i$ is the average power of the LOS component. The phase component $a_1^i(t)$ has a uniform distribution between $[0,2\pi]$, and the $\beta^i$ is a deterministic LOS phase. The channel envelope $|h_i(t)|$ of the model in \eqref{complexSig} has a shadowed Rician distribution in which the probability density function (PDF) is given by \cite{AbdiJournal}
    \begin{flalign}
        f_{|h_i|}(x) &= \left( \frac{2 b_i m_i}{2 b_i m_i + \Omega} \right)^{m_i} \frac{x}{2 b_i} \exp \left( -\frac{x^2}{2 b_i} \right) \nonumber \\
        &\quad \times {}_1F_1 \left( m_i, 1, \frac{\Omega_i x^2}{2 b_i (2 b_i m_i + \Omega_i)} \right),\label{PDFh2}
    \end{flalign}
    where ${}_1F_1 \left(\cdot,\cdot,\cdot\right)$ is the confluent hypergeometric function \cite{gradshteyn2014table}. Practical and useful characteristics on the moment generating function (MGF), and $\omega^{\text{th}}$ order moments of \eqref{PDFh2} were already derived in \cite{AbdiJournal} which are
    \begin{flalign}
        \phi_i(\omega)  &= \mathbb{E}[|h_i|^\omega] = \left( \frac{2 b_i m_i}{2 b_i m_i + \Omega_i} \right)^{m_i} (2 b_i)^{\omega/2} \Gamma \left( \frac{\omega}{2} + 1 \right) \nonumber \\
        &\quad \times {}_2F_1 \left( \frac{\omega}{2} + 1, m_i, 1, \frac{\Omega_i}{2 b_i m_i + \Omega_i} \right), \label{moment}\\
        \text{M}_{|h_i|^2}(\eta) & = \mathbb{E}\left[e^{-|h_i|^2 \eta }\right] \nonumber \\ &= \frac{(2 b_i m_i)^{m_i} (1 + 2 b_i \eta)^{m_i-1}}{\left[ (2 b_i m_i + \Omega_i)(1 + 2 b_i \eta) - \Omega_i \right]^{m_i}}, \quad \eta \geq 0,  
    \end{flalign}
    where $\Gamma \left( \cdot \right)$ is the gamma function \cite{gradshteyn2014table}.

    \begin{figure}[t]
		\centering
		\includegraphics[trim = 0cm 0cm 0cm 0cm,scale=0.5]{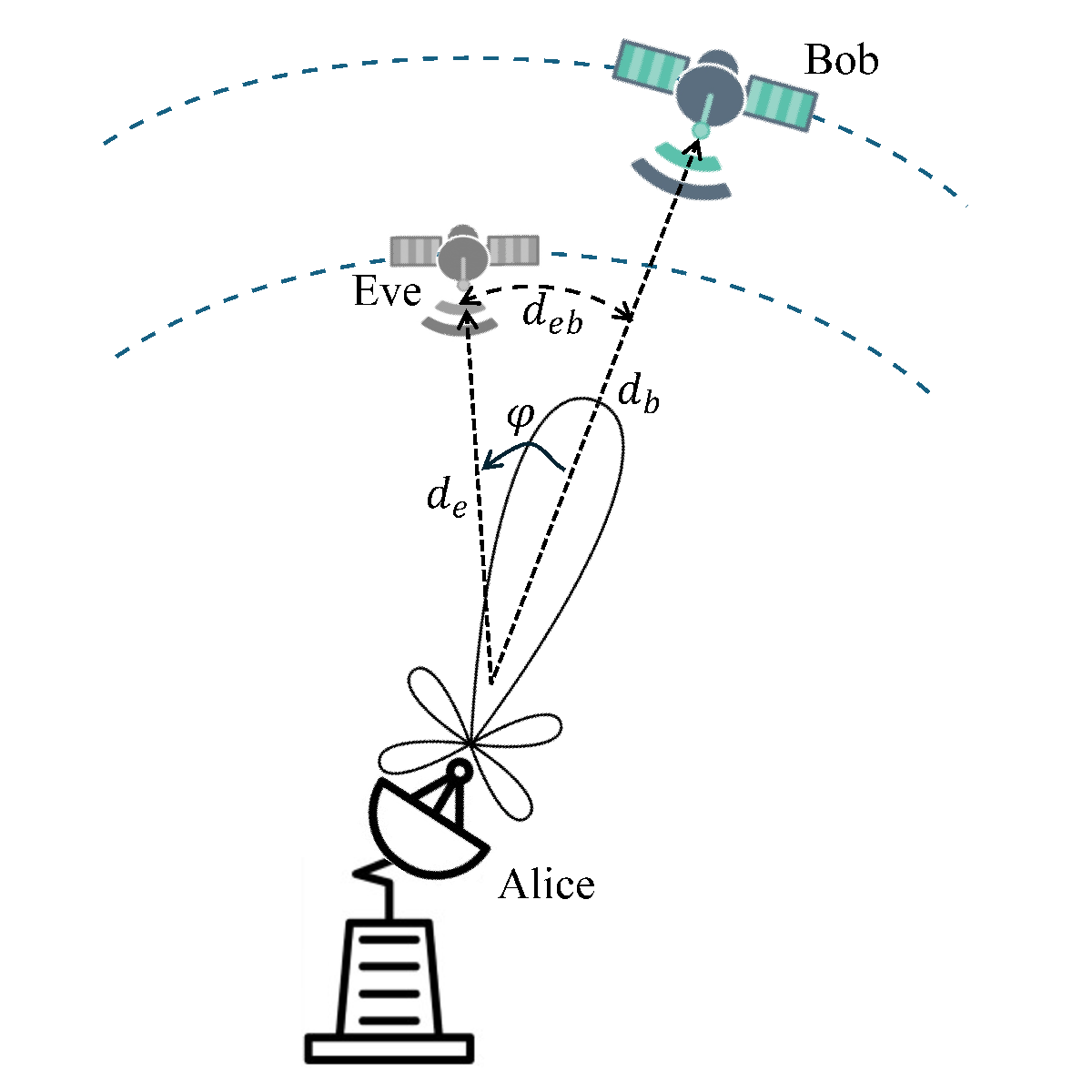}
		\caption{Illustration of the considered system model and the communication links between Alice, Bob, and Eve.}
		\label{fig:sysmodel}
    \end{figure}
    \setlength{\textfloatsep}{10pt}

Our study does not specifically target any satellite system in which ground stations and satellites have their characteristic antenna beam patterns. Instead, we characterize satellite receivers by their signal-to-noise ratios (SNRs), without considering their antenna beams.
For the radiation pattern of the ground station, we follow the ITU recommendation for interference assessment \cite{ITURS4656}. Thus, the reference radiation pattern toward Eve w.r.t.  the main beam toward Bob in Fig. \ref{fig:sysmodel} is defined by
\begin{flalign}
 G &=   32- 25 \log \varphi\;\; \text{dBi for}\;\; \varphi_\mathrm{min} \leq\varphi < 48^\circ \\
 &= -10\;\; \text{dBi for}\;\; 48^\circ \leq \varphi\leq180^\circ \nonumber
\end{flalign}
where $\varphi_\mathrm{min} = 1^\circ$.


    Finally, given the Bob, and Eve channel gains and statistics $|h_i|$ $i\in\{B,E\}$, the signal-to-noise ratio (SNR) in Bob (legitimate) and eavesdropper (Eve) are expressed as
    \begin{flalign}
        \text{SNR}_\text{B} = & P'_{\text{A}} \times G'_{\text{A}} \times G_{\text{B}} \times \text{L}_{\text{B}}^{\text{free space}} \times |h_\text{B}|^2/\sigma^2, \\ \nonumber 
        \text{SNR}_{\text{E}} = & P'_{\text{A}}\times G'_{\text{A}} \times G_{\text{E}} \times \text{L}_{\text{E}}^{\text{free space}} \times |h_\text{E}|^2/\sigma^2,
    \end{flalign}
    where $G'_{\text{A}}$, and $G'_{\text{B}}$ are the antenna gains of the Alice and Bob respectively. The transmit power of the Alice is defined as $P'_{\text{A}}$ and $\sigma^2$ is the additive white Gaussian noise power distributed as $\CN(0,\sigma^2)$. $\text{L}_{i}^{\text{free space}}=\frac{\lambda}{4\pi d_i}$ is the free space path loss where $\lambda$ is the wavelength, and $d_i$ represents the distance of satellite $i\in{\text{B},\text{E}}$ from Alice, as shown in Fig. \ref{fig:sysmodel}. Let us denote the relative arc length distance between Eve and Bob satellites as $d_{eb}$. Considering a 2D plane, we have $d_{eb}=\varphi[\text{radian}]\times d_{e}$. For notation simplicity, let us denote $P_{\text{A}}=\frac{P'_{\text{A}} G'_{\text{A}}}{\sigma^2}$.
    
    In wiretap channels, the instantaneous achievable secrecy rate $R_s$ in FBL regime can be expressed as \cite{YangWireTap}
    \begin{flalign}
    R_s \approx C_s- \sqrt{\frac{V_\text{B}}{n}} Q^{-1}(\epsilon_\text{B}) - \sqrt{\frac{V_\text{E}}{n}} Q^{-1}(\delta),
    \label{FBLs}
    \end{flalign}
    where 
    \begin{flalign}
        C_s = \left[ C_\text{B} - C_\text{E} \right]^+, \label{C_s}
    \end{flalign}
    is the secrecy capacity, and $[x]^+ = \max(0, x)$ denotes the positive part function. In addition, $C_\text{B}=\log_2(1+\text{SNR}_\text{B})$ and $C_\text{E}=\log_2(1+\text{SNR}_\text{E})$ denote the capacities of the Bob (legitimate) and eavesdropper (Eve) channels, respectively. $\epsilon_\text{B}$ is the target reliability probability for Bob, $\delta$ is the secrecy constraint (or information leakage to Eve), and $n$ is the channel block length. $V_\text{B}$ and $V_\text{E}$ are the channel dispersions of the Bob and eavesdropper channels, respectively. For $i\in\{\text{B},\text{E}\}$ the dispersion expressions are defined as
    \begin{flalign}
    V_i = \frac{1}{(\ln{2})^2} \cdot \frac{\text{SNR}_i^2 + 2 \cdot \text{SNR}_i}{\left(1 + \text{SNR}_i\right)^2},
    \end{flalign}
    where $\ln(\cdot)$ is the natural logarithm. The total channel dispersion is defined as
    \begin{flalign}
    V_C = V_\text{B} + V_\text{E} - \frac{1}{(\ln{2})^2} \cdot \frac{\text{SNR}_\text{E}^2 + 2 \cdot \text{SNR}_\text{E}}{\left(1 + \text{SNR}_\text{E}\right) \cdot \left(1 + \text{SNR}_\text{B}\right)},
    \end{flalign}
    $Q^{-1}(\cdot)$ denotes the inverse of the standard Q-function $Q(x) = \frac{1}{\sqrt{2\pi}}\int_{x}^{\infty}e^{-\nu^2/2}d\nu$. 
    Note that when $n \rightarrow \infty$ we have secrecy channel capacity in infinite blocklength regime, i.e., $R_s \rightarrow C_s$.

    \vspace{-2mm}
    \section{Average Achievable Secrecy Rate}
    \label{LB}
    Our aim is to derive a tractable analytical framework for the secrecy performance of the proposed system model. It is known that the expected average performance of the system achievable rate provides an insightful perspective regarding the bahviour of the Eve satellites. Therefore, the secrecy FBL rate in \eqref{FBLs} can be used to calculate the average achievable secrecy rate (AASR) as 
    \begin{flalign}
        \mathbb{E}\left[R_s\right] \approx \overset{\overline{R}_s}{\overbrace{\mathbb{E}\left[C_s- \sqrt{\frac{V_\text{B}}{n}} Q^{-1}(\epsilon_\text{B}) - \sqrt{\frac{V_\text{E}}{n}} Q^{-1}(\delta)\right]}}, \label{E_FBL}
    \end{flalign}
    where the expectation is taken over the channel distribution, which follows shadowed Rician fading. It is worth noting that direct calculation of \eqref{E_FBL} involves complex integrals thanks to confluent hypergeometric functions, and a closed-form solution is highly unlikely possible to the best of our knowledge. Therefore, we resort to deriving a lower bound for  \eqref{E_FBL} as it is useful also for many optimization frameworks that focus on maximizing the system key performance indicators (KPIs). In other words, maximizing the lower bound achievable rate makes more sense than working with upperbound expressions. 
    \begin{theorem}
    \label{theo1}
    A tractable lower bound for the average achievable secrecy rate in \eqref{E_FBL} is given by
    \begin{flalign}
        \mathbb{E}\left[R_s\right] \geq \tilde{R},
    \end{flalign}
   where $\tilde{R}$ is expressed as 
    \begin{flalign}
        \tilde{R}  = &\log_2\left(1+P_\text{A} G_\text{B} L_\text{B}^{\text{free space}}\exp\left(2\phi'_\text{B}(0)\right)\right) \\ \nonumber 
        & - \log_2\left(1+\overline{\text{SNR}}_\text{E}\right)  \\ \nonumber 
        & -\frac{Q^{-1}(\epsilon_\text{B})}{\ln{2}\sqrt{n}} \cdot \sqrt{1-\frac{1}{\left(1 + \overline{\text{SNR}}_\text{B}\right)^2}}  \\ \nonumber 
        & -\frac{Q^{-1}(\delta)}{\ln{2}\sqrt{n}} \cdot \sqrt{1-\frac{1}{\left(1 + \overline{\text{SNR}}_\text{E}\right)^2}}
    \end{flalign}
    with
    \begin{flalign}
        \overline{\text{SNR}}_\text{B} & \triangleq \mathbb{E}\left[\text{SNR}_\text{B}\right]=P_\text{A} G_\text{B} L_\text{B}^{\text{free space}}\phi_\text{B}(2), \\ \nonumber 
        \overline{\text{SNR}}_\text{E} & \triangleq \mathbb{E}\left[\text{SNR}_\text{B}\right]=P_\text{A} G_\text{E} L_\text{E}^{\text{free space}}\phi_\text{E}(2),\\ \nonumber 
        \phi_\text{i}(2) & = \mathbb{E}[|h_i|^2] = 2 b_i\left( \frac{2 b_i m_i}{2 b_i m_i + \Omega_i} \right)^{m_i}  \nonumber \\
         & \quad\quad\quad\ \ \ \times {}_2F_1 \left( 2, m_i, 1, \frac{\Omega_i}{2 b_i + \Omega_i} \right), \ \ i \in \{\text{B},\text{E}\} 
    \end{flalign}
    and $\phi'_{\text{B}}(0)$ is given in \eqref{phip_B}.
    
    \end{theorem}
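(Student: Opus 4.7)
The plan is to take expectation in \eqref{E_FBL} term by term after neutralising the positive-part operator. Since $[x]^+\geq x$, we have $\mathbb{E}[[C_\text{B}-C_\text{E}]^+]\geq\mathbb{E}[C_\text{B}]-\mathbb{E}[C_\text{E}]$, so dropping $[\cdot]^+$ preserves the inequality direction and yields a valid lower bound. By linearity, the problem reduces to bounding $\mathbb{E}[C_\text{B}]$ from below, $\mathbb{E}[C_\text{E}]$ from above, and $\mathbb{E}[\sqrt{V_i}]$ from above for $i\in\{\text{B},\text{E}\}$; the latter two enter with negative signs in \eqref{FBLs}, and the constants $Q^{-1}(\epsilon_\text{B})/\sqrt{n}$ and $Q^{-1}(\delta)/\sqrt{n}$ are non-negative in the usual reliability/secrecy regime $\epsilon_\text{B},\delta\in(0,1/2)$.

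For the eavesdropper capacity I would apply the classical concave Jensen inequality, $\mathbb{E}[\log_2(1+\text{SNR}_\text{E})]\leq\log_2(1+\overline{\text{SNR}}_\text{E})$, and evaluate $\overline{\text{SNR}}_\text{E}$ in closed form by specialising \eqref{moment} to $\omega=2$. For the legitimate capacity, a naive concave Jensen would go in the wrong direction; instead I would rewrite $C_\text{B}=\log_2(1+e^{Y_\text{B}})$ with $Y_\text{B}\triangleq\ln\text{SNR}_\text{B}$ and use that $y\mapsto\log_2(1+e^y)$ is convex (its second derivative equals $e^y/[(1+e^y)^2\ln 2]>0$). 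Jensen in the convex direction then yields $\mathbb{E}[C_\text{B}]\geq\log_2(1+\exp(\mathbb{E}[Y_\text{B}]))$. Decomposing $Y_\text{B}=\ln(P_\text{A} G_\text{B} L_\text{B}^{\text{free space}})+2\ln|h_\text{B}|$ and identifying $\phi'_\text{B}(0)=\frac{d}{d\omega}\mathbb{E}[|h_\text{B}|^\omega]\big|_{\omega=0}=\mathbb{E}[\ln|h_\text{B}|]$ via differentiation under the expectation recovers the first term of $\tilde{R}$ exactly.

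For the FBL penalty terms I would rewrite $V_i=(\ln 2)^{-2}\bigl(1-(1+\text{SNR}_i)^{-2}\bigr)$, so that $\sqrt{V_i}=(\ln 2)^{-1}g(\text{SNR}_i)$ with $g(x)\triangleq\sqrt{1-(1+x)^{-2}}$. A short computation confirms that $g$ is concave on $[0,\infty)$: from $g'(x)=(1+x)^{-3}/g(x)$, a second differentiation yields $g''(x)=-3(1+x)^{-4}/g(x)-(1+x)^{-6}/g(x)^3$, both summands being strictly negative. Concave Jensen then gives $\mathbb{E}[\sqrt{V_i}]\leq(\ln 2)^{-1}g(\overline{\text{SNR}}_i)$, and after multiplication by $Q^{-1}(\cdot)/\sqrt{n}\geq 0$ and negation this reproduces the last two terms of $\tilde{R}$. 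Assembling the four bounds gives $\mathbb{E}[R_s]\geq\tilde{R}$.

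The main obstacle I anticipate is conceptual rather than computational: one must pick the \emph{correct} Jensen direction for $\mathbb{E}[C_\text{B}]$, which mandates the log-exp reformulation and the moment-derivative identity $\phi'_\text{B}(0)=\mathbb{E}[\ln|h_\text{B}|]$; otherwise the resulting expression is an upper bound and the whole inequality fails. The concavity check for $g(\cdot)$ and the sign hypothesis on $Q^{-1}(\epsilon_\text{B}),Q^{-1}(\delta)$ are routine but worth stating explicitly. The remaining work—namely the explicit evaluation of $\phi'_\text{B}(0)$ in \eqref{phip_B} by differentiating the $\Gamma$ and ${}_2F_1$ factors in \eqref{moment} at $\omega=0$, and of $\phi_i(2)$ at $\omega=2$—is purely algebraic and presents no further conceptual hurdles.
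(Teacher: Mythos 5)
Your proposal is correct and follows essentially the same route as the paper's proof: concave Jensen for $\mathbb{E}[C_\text{E}]$, the log-exp reformulation with convex Jensen and the moment-derivative identity $\phi'_\text{B}(0)=\mathbb{E}[\ln|h_\text{B}|]$ for $\mathbb{E}[C_\text{B}]$, and concavity of $\sqrt{V_i}$ in $\text{SNR}_i$ for the dispersion penalties. The only (welcome) refinements are your explicit use of $[x]^+\geq x$ to discard the positive-part operator, where the paper simply assumes $C_\text{B}\geq C_\text{E}$, and your written-out second-derivative check of the concavity of $g(x)=\sqrt{1-(1+x)^{-2}}$.
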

    \begin{proof}
    Let us consider the first two capacity expressions $C_\text{B}$, and $C_\text{E}$ 
    in \eqref{C_s} (note that $C_\text{B}\geq C_\text{E}$ otherwise the secrecy rate is invalid). By noting the concavity of the function $\ln(1+x)$ and using Jensen inequality as \cite{boyd2004convex}
    \begin{flalign}
    \mathbb{E}[ \ln(1+x)] \leq \ln(1+\mathbb{E}[x]), 
    \end{flalign}
    therefore, $-\mathbb{E}[C_\text{E}]=-\mathbb{E}[\log_2(1+\text{SNR}_\text{E})]\geq -\log_2(1+\mathbb{E}[\text{SNR}_\text{E}])$. In addition, we use a variable transform as $\text{SNR}_\text{B}=\exp(u)$, $u\in\mathbb{R}$ for Bob's capacity term $C_\text{B}$, and noting that the function $\log_2(1+\exp(u))$ is a convex function and according to Jensen inequality, there is a lower bound as given by
    \begin{flalign}
        \mathbb{E}[C_\text{B}] \geq \log_2\left(1+\exp\left(\mathbb{E}[u]\right)\right),
    \end{flalign}
    where $\mathbb{E}[u]=\mathbb{E}[\ln(\text{SNR}_\text{B})]$. Here, to compute the expected value, we leverage the advantage of moment functions in \eqref{moment} 
    \begin{flalign}
        \mathbb{E}[\ln(\text{SNR}_\text{B})] = \ln\left(P_{\text{A}} G_\text{B} L_\text{B}^{\text{free space}}\right) \mathbb{E}\left[\ln(|h_\text{B}|^2)\right],
    \end{flalign}
    where we know that
    \begin{flalign}
      \mathbb{E}\left[\ln(|h_\text{B}|^2)\right] = \frac{\partial \phi_\text{B} (2\omega)}{\partial \omega}\Big|_{\omega=0}  = 2\phi'_{\text{B}}(0),
    \end{flalign}
    by calculating the partial derivative and evaluating at $\omega=0$
    \begin{flalign}
        \phi'_{\text{B}} =\phi'_{\text{B}}(\omega=0) & =  -\frac{\gamma}{2} + \frac{\ln(2)}{2} + \frac{\ln(b_\text{B})}{2}+\frac{1}{2}\left(\frac{2b_\text{B}m_{\text{B}}}{2b_\text{B}m_\text{B}+\Omega_\text{B}}\right)^{m_\text{B}} \nonumber \\ 
         &  \times {}_2F'_1 \left(1,m_\text{B},1,\frac{\Omega_\text{B}}{2 b_\text{B} (2 b_\text{B} m_\text{B} + \Omega_\text{B})}\right),
         \label{phip_B}
    \end{flalign}
   where $\gamma\approx0.5772$ is the Euler-Mascheroni constant, and the partial derivative in ${}_2F'_1 \left(a,b,c,x\right)$ is taken with respect to the first argument, i.e., $a$. Please note that the derivation result is computed in a single point which is tractable, and computationally feasible.

   Now, we focus on the dispersion terms in \eqref{E_FBL}. We note that the dispersion-related term $\tilde{V}_i=\sqrt{V_i}$ is a concave function in terms of $\text{SNR}_i\geq0$\footnote{Note that $Q^{-1}(x)\geq0$ for $0\leq x\leq0.5$. Typically $x\ll0.5$, e.g., $x=10^{-4}$.}. This can be easily proved by calculating the second derivative and showing that $\tilde{V}_i''\leq0$. Therefore, because of the negative sign in \eqref{E_FBL} $-\tilde{V}_i=-\sqrt{V_i}$ is a convex function, and we can apply Jensen inequality \cite{boyd2004convex}
   \begin{flalign}
    \mathbb{E}\left[-\sqrt{V_i}\right] \geq -\frac{1}{\ln{2}} \cdot \sqrt{\left(1-\frac{1}{\left(1 + \mathbb{E}\left[\text{SNR}_i\right]\right)^2}\right)},
    \end{flalign}
    Therefore, by substituting all obtained lower-bound expressions into \eqref{E_FBL}, the proof is completed.
    \end{proof}
    
    In addition to the lower bound derived in Theorem \ref{theo1}, there is also another method to derive an approximation for $\mathbb{E}[C_\text{B}]$ which is based on Taylor series expansion of $\ln(1+x)$ around $x_0=\mathbb{E}[x]$ and taking another $\mathbb{E}[\cdot]$ from both sides \cite{RaminTVT}
   \begin{flalign}
        \mathbb{E}[C_\text{B}] \approx \log_2\left(1+\mathbb{E}[\text{SNR}_\text{B}]\right) - \frac{\mathbb{V}[\text{SNR}_\text{B}]}{2\ln(2)(1+\mathbb{E}[\text{SNR}_\text{B}])}, 
    \end{flalign}
   with $\mathbb{V}[\cdot]$ is the variance operation. By exploiting lower bound expressions for the other convex terms in \eqref{E_FBL}, we propose another approximation for the lower bound of the average secrecy rate
   \begin{flalign}
        \tilde{R}\approx \log_2\left(1+\overline{\text{SNR}}_{\text{B}}\right) - \frac{\mathbb{E}[\text{SNR}_{\text{B}}^2]-\left(\overline{\text{SNR}}_\text{B}\right)^2}{2\ln(2)(1+\overline{\text{SNR}}_{\text{B}})},\label{lb2}
    \end{flalign}
    where similar to previous derivations $\mathbb{E}[\text{SNR}_{\text{B}}^2]$ is given by
    \begin{flalign}
        \mathbb{E}[\text{SNR}_{\text{B}}^2] = (P_\text{A} G_\text{B} L_\text{B}^{\text{free space}})^2\phi_\text{B}(4).
    \end{flalign}
    and $\phi_\text{B}(4)=\mathbb{E}[|h_\text{B}|^4]$ can be easily computed via replacing corresponding parameters in \eqref{moment}
    \begin{flalign}
        \mathbb{E}[|h_\text{B}|^4]  &= 8( b_\text{B})^{2}\left( \frac{2 b_\text{B} m_\text{B}}{2 b_\text{B} m_\text{B} + \Omega_\text{B}} \right)^{m_\text{B}}  \nonumber \\
        &\quad \times {}_2F_1 \left( 3, m_\text{B}, 1, \frac{\Omega_\text{B}}{2 b_\text{B} m_\text{B} + \Omega_\text{B}} \right).
    \end{flalign}

    \section{Numerical Results}
    In this section, we present Monte Carlo simulation results to evaluate the proposed lower bounds' accuracy. Additionally, we analyze key system-level performance metrics, including the secrecy rate and the extent of information leakage that the Eve satellite can infer. Table \ref{tab1} summarizes the parameters used in the simulations. In the presented figures, the lower bound refers to the evaluation of the analytical expression derived in Theorem \ref{theo1}, and approx. refers to \eqref{lb2}.

	\begin{table}[t]
		\caption{Simulation parameters.}
		\centering
		\begin{tabular}{ l  m{4cm} }
			\hline
			Parameter & Default value \\ \hline
			Bob's SNR ($\text{SNR}_{\text{B}}$)  & 5 dB  \\ 
                Eve's SNR ($\text{SNR}_{\text{E}}$) & -3 dB \\
                Channel blocklength ($n$) & 500 bits \\
                Information bits (packet size) & 300 bits \\
                Information leakage ($\delta$) & $10^{-3}$ \\
                Target reliability ($\epsilon$) & $10^{-3}$ \\
                Bob satellite antenna gain & $\text{G}_{\text{B}}=1$ \\
                Eve satellite antenna gain & $\text{G}_{\text{E}}=1$ \\
                $\{\Omega_i$, $m_i$, $b_i\}$, $i \in \{\text{B, E}\}$ & $\{0.515, 26, 0.005\}$ \\
                Satellite distance $d_{i}$ $i \in \{\text{B},\text{E}\}$ & 2000 km \\
                Minimum phase in gain profile ($\varphi_{\text{min}}$) & 1$^\circ$ \\ 
			\hline
		\end{tabular}
		\label{tab1}
	\end{table}

    Fig. \ref{fig:Rate_vs_blocklengths} illustrates the average secrecy rate as a function of the channel blocklength in three cases, namely as in the derived analytical lower bounds, Monte Carlo simulations, and the asymptotic secrecy capacity. From the negligible gap between analytical lower bounds and the exact secrecy rate calculations, the effectiveness of the proposed analytical framework is validated. The results confirm that as blocklength increases, the secrecy rate approaches its asymptotic limit, and the lower bounds provide a tight approximation to the exact secrecy rate.
    
    \begin{figure}[t]
        \centering
        \includegraphics[width=1\linewidth]{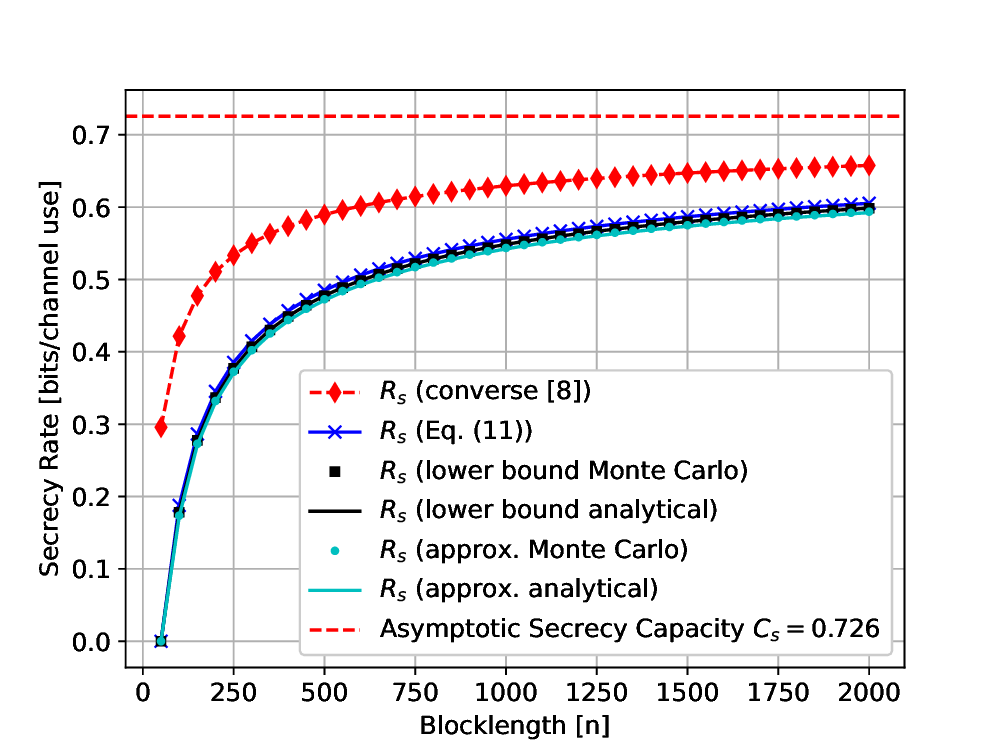}
        \caption{Average secrecy rate vs. channel blocklength.}
        \label{fig:Rate_vs_blocklengths}
    \end{figure}

    In Fig. \ref{fig:Rate_vs_SNR_B_dB_values} the secrecy rate as a function of Bob’s SNR is shown. The secrecy rate increases with higher Bob's SNR, indicating that improved legitimate channel conditions enhance secure communication. The derived analytical lower bounds closely match the Monte Carlo results, validating the theoretical derivations.
    \begin{figure}[t]
        \centering
        \includegraphics[width=1\linewidth]{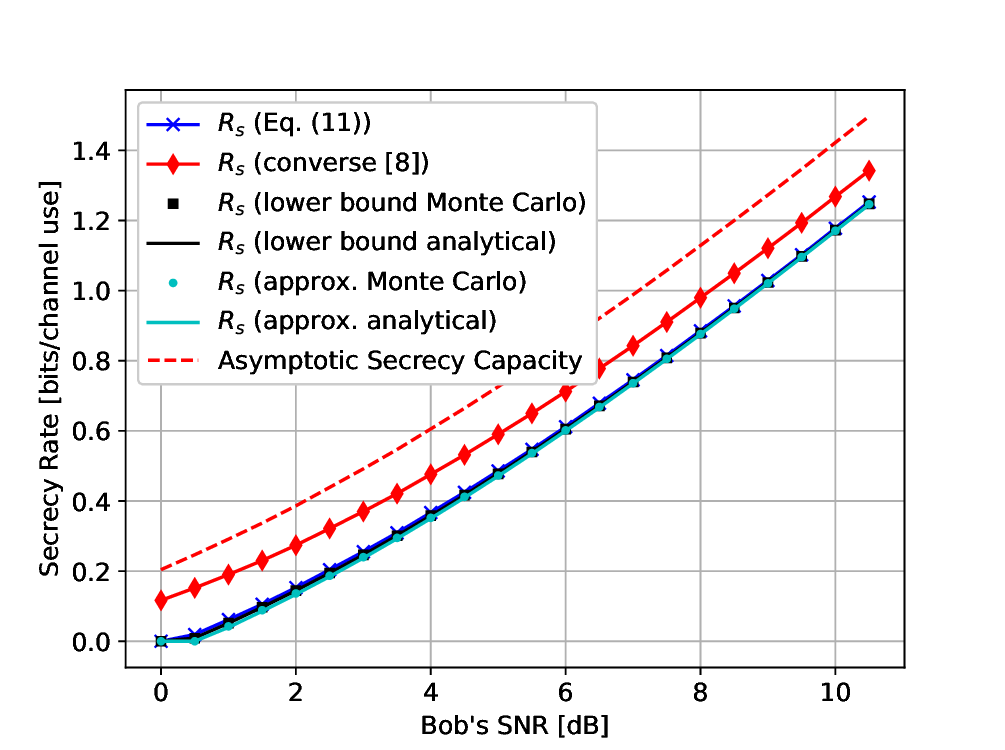}
        \caption{Average secrecy rate vs. Bob's SNR.}
        \label{fig:Rate_vs_SNR_B_dB_values}
    \end{figure}

    The performance of the average achievable rate in terms of Eve's SNR is plotted in Fig. \ref{fig:Rate_vs_SNR_E_dB_values}. As Eve's SNR increases, the secrecy rate decreases, demonstrating the vulnerability of the system to better eavesdropper channel conditions. The derived lower bounds provide accurate predictions of the secrecy performance, which is also supported by Monte Carlo simulation results.
    \begin{figure}[t]
        \centering
        \includegraphics[width=1\linewidth]{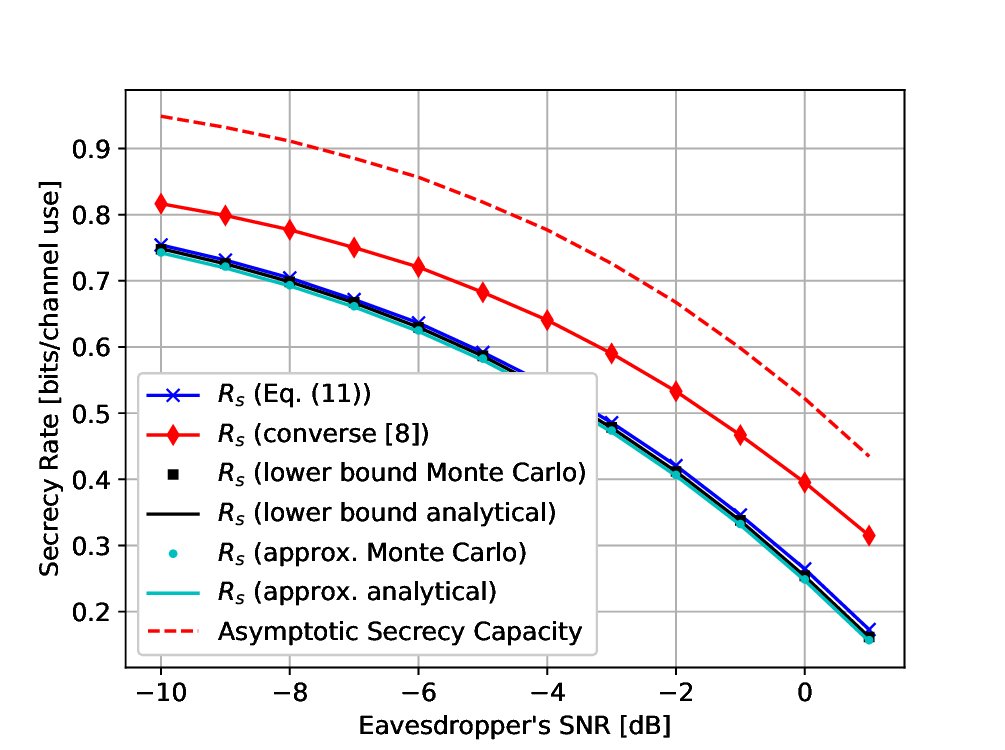}
        \caption{Average secrecy rate vs. Eve’s SNR.}
        \label{fig:Rate_vs_SNR_E_dB_values}
    \end{figure}

    Fig. \ref{fig:theta_vs_informationLeakage_values} depicts the information leakage as a function of the angle $\varphi$ (transmitter antenna orientation w.r.t. Eve), and the inter-satellite distance $d_{eb}$. The results show that information leakage increases for smaller $\varphi$, which implies that directional antenna patterns can effectively enhance secrecy. From another perspective, when the Eve satellite is at an arc distance of ~45 km from Bob (\(d_{eb} = 45\) km), the information leakage is approximately \(\delta \approx 10^{-4}\). This shows that the arc distance of the Eve satellite plays a crucial role in information leakage, and 
    the information leakage experienced by the eavesdropping satellite (Eve) is significantly influenced by the beam width and the proportional angle of the transmitting ground station. Notably, a change of a few degrees in the antenna orientation can lead to an information leakage increase of up to tenfold. 
    \begin{figure}[t]
        \centering
        \includegraphics[width=1\linewidth]{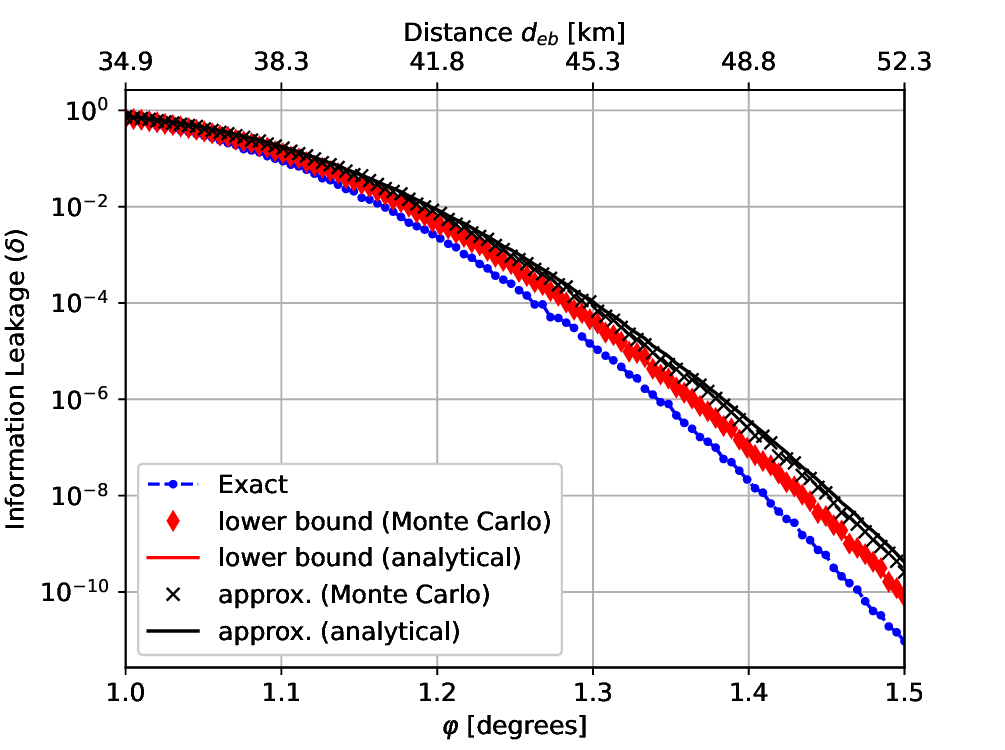}
        \caption{Impact of antenna orientation angle $\varphi$, and $d_{eb}$ on information leakage. A higher $\varphi$ results in reduced leakage.}
        \label{fig:theta_vs_informationLeakage_values}
    \end{figure}

\vspace{-2mm}
\section{Conclusion}
In this paper, we analyzed the average secrecy performance of satellite networks in short packet communication systems in FBL regime over shadowed Rician fading channels. We derived a lower bound for the achievable secrecy rate in the FBL regime and validated the theoretical framework through Monte Carlo simulations. The numerical results demonstrated that increasing blocklength and improving the legitimate user's SNR enhance secrecy performance, while higher eavesdropper SNR degrades it. Furthermore, directional antenna patterns reduce information leakage and enhance overall security. These findings offer insights into designing secure satellite communication systems for reliable and secret applications such as certain IoT and space-based networks.

	\bibliographystyle{IEEEtran}
    \bibliography{refs}
	
\end{document}